\newcommand*\samethanks[1][\value{footnote}]{\footnotemark[#1]}
\author{Armeen Taeb\thanks{Dept.~of Electrical, Computer, and Energy Engineering, University of Colorado at Boulder.}
\and Arian Maleki\thanks{Dept.~of Statistics, Columbia University.} 
\and Christoph Studer\thanks{Dept.~of Electrical and Computer Engineering, Rice University.} 
\and Richard~G.~Baraniuk\samethanks}
\title{Maximin Analysis of Message Passing Algorithms for Recovering Block Sparse Signals}
\begin{document}

\maketitle



\begin{abstract}
We consider the problem of recovering a block (or group) sparse signal from an underdetermined set of random linear measurements, which appear in compressed sensing applications such as radar and imaging. Recent results of Donoho, Johnstone, and Montanari have shown that approximate message passing (AMP) in combination with Stein's shrinkage outperforms group LASSO for large block sizes. 
In this paper, we prove that, for a fixed block size and in the strong undersampling regime (i.e.,  having very few measurements compared to the ambient dimension), AMP cannot improve upon group LASSO, thereby complementing the results of Donoho \emph{et al.}
\end{abstract}

\begin{keywords} 
Group sparsity; group LASSO; approximate message passing; phase transition. 
\end{keywords}

\pagestyle{myheadings}
\thispagestyle{plain}

\section{Introduction}


The field of compressed sensing (CS) aims to recover a sparse signal from an undetermined systems of linear equations. Concretely, CS can be modeled as $\mathbf{y} = A\mathbf{x}$, where $\mathbf{y}$ is the $n$-dimensional measurement vector, $A$ is the (typically random) $n \times N$ measurement matrix, and $\mathbf{x}$ is an $N$ dimensional vector with at most~$k$ nonzero entries (often referred to as $k$-sparse vector). Our goal is to recover $\mathbf{x}$ from this undetermined system.

A large class of signals of interest exhibit additional structure known as \emph{block (or group) sparsity}, where the non-zero coefficients of the signal occur in clusters of size~$B$~\cite{YL06,Block2009}. Such block-sparse signals naturally appear in genomics, radar, and communication applications. There has been a considerable amount of research on theory and algorithms for recovering such signals~\cite{YL06,CH08,NR08, Bach08,MGB08,Block2009, LoPoTsGe09, MaCeWi05, ObWaJo11, duarte2011performance, lvgroup}. Perhaps the most popular recovery algorithm is \emph{group LASSO}~\cite{YL06}---corresponding theoretical work has shown that under which conditions this algorithm recovers the exact group sparse solution \cite{CH08,NR08, Bach08,MGB08,Block2009, LoPoTsGe09, MaCeWi05, ObWaJo11, duarte2011performance, lvgroup}. While these results enable a qualitative characterization of the recovery performance of group LASSO, they do not provide an accurate performance analysis. 

In order to arrive at a more accurate performance analysis of group LASSO, several authors have considered the asymptotic setting where $n, N \rightarrow \infty$, while their ratio $\delta = n/N$ is held constant \cite{DoTa08ArXiv, IntroAMP, stojnic2009block, CAMP, DONOHOGS}.  Under this setting, the references \cite{stojnic2009block, CAMP, DONOHOGS}  have shown that there exists a threshold on the normalized sparsity~$\rho = k/n$, below which group LASSO recovers the correct signal vector~${\bf x}$ with probability $1$ and fails otherwise. Such a  phase-transition (PT) analysis has led to the conclusion that group LASSO is sub-optimal, since there is a large gap between the PT of the information theoretic limit\footnote{The information theoretic limit was only derived for regular sparse signals (with block size 1) in \cite{WuVe10}. An extension of these results to block sparse signals with larger blocks is straightforward.} and that of  group LASSO (see Fig. \ref{Fig:groupLassopT}). 
\begin{figure}[!htbp]
  \centering
  	\includegraphics[width=.5\textwidth]{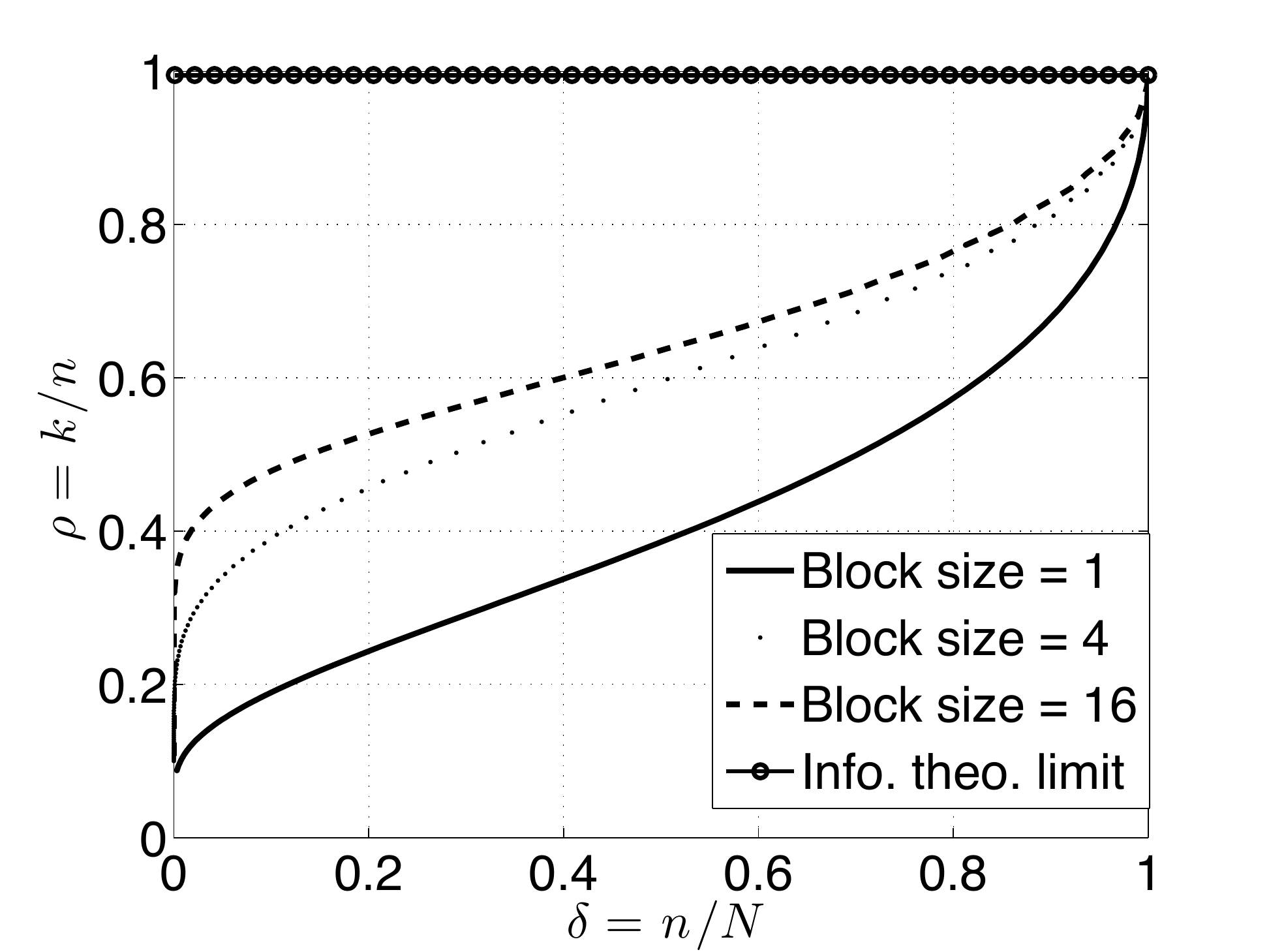}
	\vspace{-0.2cm}
  	\caption{Phase transition (PT) of group LASSO for various block sizes $B$. Evidently, there is a disparity between the phase transition of group LASSO and the information theoretic limit. }
	\vspace{-0.3cm}
     \label{Fig:groupLassopT}
\end{figure}

There has been recent effort in using \emph{approximate message passing} (AMP) to improve upon the performance of group LASSO.  Schniter, for example, has experimentally shown in \cite{Schnit} that AMP combined with expectation maximization can outperform group LASSO. Kamilov \emph{et al.}~have taken the first step toward a theoretical understanding of such algorithms \cite{Fletch}. More recently, Donoho, Johnstone, and Montanari \cite{DONOHOGS} have shown that AMP is able to outperform group LASSO if it employs Stein's shrinkage estimator. In fact, they demonstrate that for very large blocks sizes, i.e., for $B\rightarrow\infty$, the performance of this AMP variant is close to the information theoretic limit. However, in many applications, such as radar and communication systems, the group sizes are typically small and for fixed block sizes, Stein's estimator does not necessarily improve the performance of AMP. 
Consequently, the fundamental question remains whether AMP can outperform group LASSO for any block size. In this paper, we address this question in the high undersampling regime where $\delta\rightarrow0$.

In particular, we show that, for $\delta \rightarrow 0$, there is no nonlinear shrinkage function that allows AMP to outperform group LASSO. We emphasize that this result does not contradict that in \cite{DONOHOGS} as they considered a different limiting regime, i.e., where the block size $B$ approaches infinity. 
A combination of these two results enables us to conclude that, for strong undersampling (i.e., small values of $\delta$), AMP with Stein's estimator requires large block sizes in order to outperform group LASSO.

\section{Background}


\vspace{-0.15cm}
\subsection{Notation} \label{sec:notation}

\vspace{-0.2cm}

Lowercase boldface letters, such as ${\bf v}$, represent vectors and uppercase letters, such as $V$, represent matrices; lowercase letters, such as $v$ represent scalars. 
We analyze the recovery of a block (or group) sparse signal ${\bf x} \in \mathbb{R}^{N}$ with at most $k$ nonzero entries from the undersampled linear measurements ${\bf y} = A {\bf x}$, where $A \in \mathbb{R}^{n \times N}$ is i.i.d. zero-mean Gaussian with unit variance. We furthermore consider the asymptotic setting where $\delta = n/N$, $\rho = k/n$, and $N,n,k \to \infty$.

The notational conventions for block sparse signals are as follows. We assume that all the blocks have the same size, denoted by $B$. Extensions to signals with varying block sizes is straightforward.  The signal ${\bf x}$ is partitioned into $M$ blocks where clearly, $N = MB$. In the remainder of the paper, we will denote ${\bf x}_B$ as a particular block.  Suppose that the elements of ${\bf x}_B$ are drawn from a given distribution $F({\bf x}_B)  = (1- \epsilon) \delta_{0}(\|{\bf x}_B\|_2) + \epsilon G({\bf x}_B)$, where $\epsilon = \rho\delta$, and $\delta_0$ is the Dirac delta function; $G$ is a probability distribution that is typically unknown in practice.

The block soft-thresholding function used in this paper is defined as follows~\cite{DONOHOGS}:
\begin{equation}
\eta^{\text{soft}}({\bf y} ;\tau) = \frac{{\bf y}_B}{\left\| {\bf y}_B\right\|_2} (\left\| {\bf y}_B\right\|_2-\tau)_{+} .
\label{eqn:softThresh}
\end{equation}
Here, $(z)_{+} = \max(z,0)$ and $\eta^{\text{soft}}({\bf y}_B ;\tau)$ sets its argument ${\bf y}_B$ to zero if $\left \|{\bf y}_B \right \|_2 \leq \tau$, and shrinks the vector ${\bf y}_B$ towards the origin by $\tau$, otherwise.


\subsection{Group LASSO and approximate message passing (AMP)} \label{sec:amp}

A decade of research in sparse recovery has produced a plethora of algorithms for recovering block sparse signals from random linear measurements. Two popular algorithms are group LASSO and AMP. Group LASSO searches for a vector~$\mathbf{x}$ that minimizes the cost function, ${\bf x}^{\rm L} \triangleq \arg \min_{{\bf x}} \{   \sum_{B=1}^M \|{\bf x}_B\|_2 \colon {\bf y}=A {\bf x} \}$. 
AMP, on the other hand, is an iterative algorithm to recover the solution vector ${\bf x}$. Concretely, by initializing ${\bf x}^0 = \mathbf{0}$ and ${\bf z}^0 = \mathbf{0}$, AMP iteratively performs the following steps:
\begin{align}
\label{eqn:AMPiter}
{\bf x}^{t+1} = \eta({\bf x}^t+ A^* {\bf x}^t) \quad \text{and} \quad 
{\bf z}^t = {\bf y}- A {\bf x}^t + {\bf c}^t.
\end{align} 
Here, ${\bf c}^t$ is a correction term that depends on the previous iterations, which significantly improves the convergence of AMP; ${\bf x}^t$ is the (block) sparse estimate at iteration $t$, and $\eta$ is a nonlinear function that imposes (block) sparsity. In particular,  if $\eta(\cdot)$ is the block soft-thresholding function as in \eqref{eqn:softThresh}, then AMP is equivalent to group LASSO in the asymptotic setting~\cite{CAMP,DONOHOGS} (see \cite{IntroAMP} for the details). 
   
 One of the most appealing features of the AMP is that its operation can be viewed as a denoising problem at each iteration. That is, when $N \to \infty$, ${\bf x}^t + A^* {\bf z}^t$ can be modeled as the sparse signal ${\bf x}$ plus zero-mean Gaussian noise, which is independent of the signal. This feature enables one to analytically predict the performance of AMP through a framework called \emph{state evolution} (SE). Concretely, if the mean-square error~(MSE) of AMP at iteration $t$ is denoted by 
 ${\rm MSE}^t$, then
 \begin{align}
 {\rm MSE}^{t+1} = \frac{1}{\delta B} \mathbb{E} \Big\{\big\|{\bf x}_B - \eta^{t}({ {\bf x}_B+ \sqrt{\rm MSE^t} {\bf z}_B}) \big\|_2^2 \Big\},
 \label{eqn:se}
 \end{align}
 where ${\bf z}_B \sim N(0,I_{B})$ and the distribution of ${\bf x}_B$ is the same as the empirical distribution of the blocks of the original vector ${\bf x}$. The expectation $\mathbb{E}\{\cdot\}$ is taken with respect to the vectors ${\bf z}_B$ and ${\bf x}_B$. 

\subsection{Phase transition}\label{sec:pt}

The performance of CS recovery algorithms can be characterized accurately by their \emph{phase transition} (PT) behavior.  Specifically, we define a two-dimensional phase space $(\delta,\rho) \in [0,1]$ that is partitioned into two regions: ``success" and ``failure", with these regions separated by the PT curve $(\delta,\rho(\delta))$. For the same value of $\delta$, algorithms with higher PT outperform algorithms with lower PT, i.e., guarantee the  exact recovery for more nonzero entries~$k$.
%
%

\section{Main results}


The thresholding function $\eta^t$ determines the performance of AMP. Indeed, different choices of $\eta^t$ may lead to fundamentally different performance. It has been shown in \cite{DONOHOGS, CAMP} that if $\eta^\text{soft}$ from \eqref{eqn:softThresh} is used, then the performance of AMP is equivalent to that of group LASSO. Since $\eta^\text{soft}$ is not necessarily the optimal thresholding function for group sparse signals, finding the optimal  function is of significant practical interest. 

In this paper, we characterize the optimal choice of the thresholding function $\eta^t$ in the strong undersampling regime, i.e., for $\delta \rightarrow 0$. Before we proceed, let us define optimality. Suppose that each block ${\bf x}_B$ is drawn independently from the distribution $F(\mathbf{x}_B)$ as defined in Section \ref{sec:notation}. We furthermore assume that $\eta^t$ is applied to each block separately. 
As is evident from \eqref{eqn:se}, each iteration of AMP is equivalent to a denoising problem, where the noise variance is equal to the MSE of the previous iteration. Therefore for a given initialization point, the effect of the thresholding function $\eta^t$ on AMP can be characterized by a discrete set $\{\text{MSE}^t\}_{t = 1}^{\infty}$. Thus, instead of considering a sequence of iteration-dependent thresholding functions $\{\eta^t\}_{t=1}^{\infty}$, we can consider a sequence of thresholding functions~$\overline{\eta}$ that depend on $\text{MSE}^t$ where $\overline{\eta}: \mathbb{R}^B \times \mathbb{R} \rightarrow \mathbb{R}^B$ with the propery ${\eta}^t({\bf y}_B) = \overline{\eta}({\bf y}_B,\text{MSE}^t)$. Since the MSE sequence is dependent on the initialization point of AMP (which can be chosen arbitrarily), we wish to optimize $\overline{\eta}$ with respect to \emph{all possible} initializations. Hence at each iteration, the problem is simplified to finding the optimal~$\overline{\eta}$  which is a function of $\mathbf{y}_B$ and any MSE value greater than zero. Now, suppose that the PT of AMP with $\bar{\eta}$ is given by~$\rho^{\overline{\eta}}\!(\delta, G)$. Then, we are interested in thresholding functions that achieve:
\begin{equation}
\rho^*(\delta) \triangleq \sup_{\overline{\eta}} \inf_G \rho^{\overline{\eta}}\!(\delta, G). 
\label{eqn:optimalpt}
\end{equation}
Such an $\overline{\eta}$ provides the best PT performance for the least favorable distribution---a reasonable assumption, since the distribution~$G$ is typically unknown in many practical applications.
Our first contribution characterizes the behavior of $\rho^*(\delta)$ for strong undersampling, i.e., for small values of $\delta$.
\begin{theorem}\label{thm:optimal}
The optimal PT $\rho^*(\delta)$ of AMP follows the behavior
\[
\rho^*(\delta) \sim \frac{B}{2 \log(1/\delta)} \quad \text{as} \quad \delta \rightarrow 0. 
\]
\end{theorem}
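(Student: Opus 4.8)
The plan is to convert the maximin problem~\eqref{eqn:optimalpt} into a one-dimensional contraction condition for the state-evolution map~\eqref{eqn:se}, collapse the worst case over $G$ onto a single least-favorable prior, and then read off the $\delta\to0$ rate by balancing a false-alarm term against a thresholding-bias term.

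First I would note that, since $\overline{\eta}$ may depend on the current noise level $\mathrm{MSE}^t$ and we optimize over all initializations, AMP with $\overline{\eta}$ and signal law $G$ succeeds if and only if the scalar map $m \mapsto \Psi_G(m) = \frac{1}{\delta B}\,\mathbb{E}\|\mathbf{x}_B - \overline{\eta}(\mathbf{x}_B + \sqrt{m}\,\mathbf{z}_B, m)\|_2^2$ obeys $\Psi_G(m) < m$ for every $m>0$; thus $\rho^{\overline{\eta}}(\delta,G)$ is the largest $\rho$ (with $\epsilon=\rho\delta$) for which this holds. Specializing to block soft-thresholding with threshold $\lambda\sqrt{m}$ and using scale invariance, the risk on a null block is $m\,r_0(\lambda)$ with $r_0(\lambda) = \mathbb{E}[(\|\mathbf{z}_B\|_2-\lambda)_+^2]$, while the risk on a nonzero block is monotone increasing in its norm and saturates at $m\,(B+\lambda^2)$. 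Consequently $\Psi_G(m)\le \frac{m}{\delta B}[(1-\epsilon)r_0(\lambda)+\epsilon(B+\lambda^2)]$ for \emph{every} $G$, with equality in the limit of a point mass at infinity, which is therefore least favorable. For that prior $\Psi_G$ is exactly linear, so the success condition becomes the slope condition
\[
\frac{1}{\delta B}\Big[(1-\epsilon)\,r_0(\lambda) + \epsilon\,(B+\lambda^2)\Big] < 1 .
\]

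I would establish both inequalities of the claimed equivalence from this. For the achievability direction $\rho^*(\delta)\gtrsim B/(2\log(1/\delta))$, I tune $\lambda$ to minimize the bracketed quantity; the uniform saturation bound above then guarantees $\Psi_G(m)<m$ for \emph{all} $G$, so this tuned block soft-threshold already attains the critical $\rho$ obtained by setting $\min_\lambda[(1-\epsilon)r_0(\lambda)+\epsilon(B+\lambda^2)]=\delta B$. For the converse $\rho^*(\delta)\lesssim B/(2\log(1/\delta))$, I argue that no nonlinear $\overline{\eta}$ can do better: for any $\overline{\eta}$ the adversary selects $G$ to be a point mass at a large, suitably chosen radius and invokes a minimax lower bound on the Bayes block-denoising risk, using that correctly rejecting the $\Theta(1/\epsilon)$ null blocks forces an effective detection radius of order $\lambda$ and hence a bias cost matching $\epsilon(B+\lambda^2)$.

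The asymptotics then follow by a routine computation. Writing $\|\mathbf{z}_B\|_2$ as a chi variable with $B$ degrees of freedom, its density tail gives $r_0(\lambda)\sim c_B\,\lambda^{B-4}e^{-\lambda^2/2}$ and $r_0'(\lambda)\sim -\lambda\,r_0(\lambda)$ as $\lambda\to\infty$. Stationarity of $(1-\epsilon)r_0(\lambda)+\epsilon(B+\lambda^2)$ forces $r_0(\lambda^*)\sim 2\epsilon$, so the minimized risk is $\sim\epsilon\lambda^{*2}$; the critical condition $\epsilon\lambda^{*2}\sim \delta B$ with $\epsilon=\rho\delta$ gives $\rho\sim B/\lambda^{*2}$. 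Taking logarithms of $r_0(\lambda^*)\sim 2\rho\delta$ and discarding the subdominant $\log\rho$ and $\log\lambda^*$ terms yields $\lambda^{*2}\sim 2\log(1/\delta)$, whence $\rho^*(\delta)\sim B/(2\log(1/\delta))$.

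The main obstacle is the converse in the second paragraph: proving that an arbitrary nonlinear shrinkage cannot beat block soft-thresholding in this limit requires a careful least-favorable-prior construction together with a lower bound on the Bayes block-denoising risk that matches the soft-threshold upper bound to leading order, which is exactly the precision needed for the $1+o(1)$ asymptotic equivalence. The scale-invariant reduction and the verification that the signal-at-infinity prior is genuinely worst case are the technical crux; by contrast the chi-tail estimate and the final balancing are standard.
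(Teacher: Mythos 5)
Your architecture coincides with the paper's: the lower bound $\rho^*(\delta)\gtrsim B/(2\log(1/\delta))$ comes from tuned block soft thresholding (this is exactly Theorem~\ref{thm:LASSO}, which the paper invokes verbatim for this direction), and the upper bound is to come from a near-least-favorable prior concentrated at radius $\mu\approx\sqrt{2\log(1/\epsilon)}$ together with a Bayes-risk lower bound. Your achievability half, the reduction of the PT to the slope/fixed-point condition on the state-evolution map, and the final chi-tail balancing ($r_0(\lambda^*)\sim 2\epsilon$, $\lambda^{*2}\sim 2\log(1/\delta)$) all track the paper's computation and are fine.

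The genuine gap is the converse, which you yourself flag as ``the main obstacle'' and then only assert rather than prove: ``invokes a minimax lower bound on the Bayes block-denoising risk \ldots forces an effective detection radius of order $\lambda$.'' No such off-the-shelf bound exists to be invoked; establishing it is the entire content of Theorem~\ref{thm:optimal} beyond Theorem~\ref{thm:LASSO}. The paper does the work concretely: it takes $G^*$ uniform on the sphere of radius $\mu$ with $\mu^2=2(1-\gamma)\log\big(\tfrac{1-\epsilon}{\epsilon}\big)$, uses the minimax identity $\inf_\eta\sup_G=\sup_G\inf_\eta$ to replace the best $\overline\eta$ by the Bayes estimator for $G^*$, evaluates that estimator by Laplace's method to show it acts as a hard threshold at $\|{\bf y}_B\|_2=\mu+a$ (so the signal is \emph{missed} with probability tending to one and each signal block contributes risk $\approx\mu^2$), proves $\|\hat{{\bf x}}_B(G^*)\|_2\le\mu$ to control the complementary event, and bounds the relevant tails via Lemma~\ref{lemma:prob1}, concluding $M_B(G^*)\sim\epsilon\mu^2/B$ and hence $\rho^*(\delta)\lesssim B/(2\log(1/\delta))$. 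Your point-mass prior would serve equally well (the one-dimensional detection calculation gives the same $\epsilon\mu^2/B$ asymptotics), but until you carry out the analogous quantitative estimate --- in particular, showing that the Bayes estimator cannot detect a signal placed just below the $\sqrt{2\log(1/\epsilon)}$ boundary and controlling both the miss and false-alarm contributions to the risk --- the upper-bound half of the claimed asymptotic equivalence is unproven.
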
 
As shown in Section \ref{sec:pfAMP}, this behavior is determined when $G$ is uniformly distributed on a sphere with infinite radius. We note that if the distribution $G$ is unknown, this theorem does not provide any guidelines on how to choose $\eta^t$ in \eqref{eqn:AMPiter}. The next theorem shows that group LASSO follows exactly the same behavior, and hence, block soft thresholding is the optimal choice for $\eta^t$ in the strong undersampling regime. 
\begin{theorem}\label{thm:LASSO} The PT $\rho^L(\delta)$ of group LASSO follows the behavior 
\[
\rho^\text{L} (\delta) \sim \frac{B}{2 \log(1/\delta)} \ \  {\rm as} \ \ \delta \rightarrow 0. 
\]
\end{theorem}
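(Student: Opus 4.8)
The plan is to exploit the known equivalence between group LASSO and AMP run with the block soft-thresholding nonlinearity $\eta^{\text{soft}}$, so that $\rho^L(\delta)$ can be read directly off the state evolution recursion \eqref{eqn:se} taken with $\eta^t(\cdot)=\eta^{\text{soft}}(\cdot\,;\tau\sqrt{\mathrm{MSE}^t})$. First I would show that, for soft-thresholding, the success region $\mathrm{MSE}^t\to0$ is governed entirely by the stability of the fixed point at the origin, and that this stability criterion is independent of $G$. Linearizing the per-block risk as $\mathrm{MSE}^t\to0$, a nonzero block contributes $\mathrm{MSE}^t(B+\tau^2)$ (the magnitude $\|{\bf x}_B\|_2$ cancels, which is precisely the source of the $G$-independence), while a zero block contributes $\mathrm{MSE}^t\,m_B(\tau)$, where I set $m_B(\tau)=\mathbb{E}[(\|{\bf z}_B\|_2-\tau)_+^2]$ and $m_B^{(1)}(\tau)=\mathbb{E}[(\|{\bf z}_B\|_2-\tau)_+]$ with ${\bf z}_B\sim N(0,I_B)$. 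Hence the recursion near the origin is linear, $\mathrm{MSE}^{t+1}=c(\tau)\,\mathrm{MSE}^t$ with
\[
c(\tau)=\frac{1}{\delta B}\Big[(1-\epsilon)\,m_B(\tau)+\epsilon\,(B+\tau^2)\Big],\qquad \epsilon=\rho\delta,
\]
and the transition is the locus $\min_\tau c(\tau)=1$. (This already yields the upper bound $\rho^L\le\rho^*$ from Theorem \ref{thm:optimal}, since $\eta^{\text{soft}}$ is an admissible $\overline{\eta}$.)

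Next I would turn $\min_\tau c(\tau)=1$ into the coupled system $g(\tau^\star)=\delta B$ and $g'(\tau^\star)=0$, where $g(\tau)=(1-\epsilon)m_B(\tau)+\epsilon(B+\tau^2)$; using $m_B'(\tau)=-2m_B^{(1)}(\tau)$, the stationarity equation reads $(1-\epsilon)m_B^{(1)}(\tau^\star)=\epsilon\,\tau^\star$. As $\delta\to0$ the minimizing threshold diverges, $\tau^\star\to\infty$, so the crucial input is the large-$\tau$ chi tail: from $f_B(r)\propto r^{B-1}e^{-r^2/2}$ and Laplace's method one obtains $m_B^{(1)}(\tau)\sim C_B\,\tau^{B-3}e^{-\tau^2/2}$ and $m_B(\tau)\sim 2C_B\,\tau^{B-4}e^{-\tau^2/2}$ for an explicit $B$-dependent constant $C_B$. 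Substituting stationarity ($\epsilon\approx m_B^{(1)}(\tau^\star)/\tau^\star$) into $g(\tau^\star)=\delta B$, the dominant balance is $\epsilon\,(\tau^\star)^2\asymp\delta B$, since the term $\epsilon(\tau^\star)^2\sim m_B^{(1)}(\tau^\star)\,\tau^\star$ dominates both $m_B(\tau^\star)$ and $\epsilon B$ by a divergent factor of order $(\tau^\star)^2$. Equating the two expressions for $\epsilon$ and taking logarithms in the tail estimate gives $(\tau^\star)^2\sim2\log(1/\delta)$, and feeding this back yields $\rho=\epsilon/\delta\sim B/(\tau^\star)^2\sim B/\big(2\log(1/\delta)\big)$, as claimed.

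The main obstacle I anticipate is upgrading this leading-order balance to a rigorous asymptotic: I must control the polynomial prefactors $\tau^{B-3}$ and confirm that the discarded contributions (namely $m_B(\tau^\star)$, the $\epsilon B$ term, and the $(1-\epsilon)$ corrections) perturb $(\tau^\star)^2$ only by an additive $O(\log\log(1/\delta))$, so that neither the leading $2\log(1/\delta)$ denominator nor the constant $B$ in the numerator is affected. Producing two-sided Laplace bounds on the chi-tail integrals valid for every finite $B$, and verifying that on the success side the origin is genuinely the unique attracting fixed point (so that $c(\tau)<1$ really does coincide with exact recovery rather than merely local stability), are the steps requiring the most care; the remainder is substitution into \eqref{eqn:se}. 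As a consistency check I would confirm that the system reduces to the classical scalar LASSO transition $\rho^L\sim1/\big(2\log(1/\delta)\big)$ when $B=1$.
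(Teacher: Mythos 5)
Your proposal is correct and follows essentially the same route as the paper: both reduce group LASSO to AMP with block soft thresholding, characterize the transition by the minimax state-evolution condition $\inf_\tau\{(1-\epsilon)m_B(\tau)+\epsilon(B+\tau^2)\}=\delta B$ together with its stationarity equation, apply Laplace's method to the chi-tail integrals (your $m_B^{(1)}(\tau)\sim C_B\tau^{B-3}e^{-\tau^2/2}$ and $m_B(\tau)\sim 2C_B\tau^{B-4}e^{-\tau^2/2}$ match the paper's $I_1$ and $I_2$ exactly), and conclude via the same dominant balance $\rho\sim B/(\tau^\star)^2$ with $(\tau^\star)^2\sim 2\log(1/\delta)$. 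The only cosmetic difference is that you derive the transition condition by linearizing the SE recursion at the origin, whereas the paper cites the known $M_B=\delta$ characterization; the global-versus-local stability issue you flag is likewise deferred to the cited literature in the paper.
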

Combining Theorems \ref{thm:optimal} and \ref{thm:LASSO} reveals that for a fixed $B$ and in the strong undersampling regime, i.e., for $\delta \to 0$, the best achievable PT of AMP coincides with the phase transition of group LASSO. This result has two striking implications for strong undersampling and fixed block sizes: (i) Block soft thresholding is optimal and (ii) there is no thresholding function for AMP that outperforms group LASSO. 
Consequently, for decreasing $\delta$, AMP equipped with a better thresholding operator (than block soft-thresholding) such as Stein's shrinkage requires larger block sizes to outperform group LASSO in this regime.
It is worth mentioning that these results do not contradict those of \cite[Section 3.2]{DONOHOGS}, which show that AMP with Stein's shrinkage outperforms group LASSO for large block sizes. In fact, combining our results with those in  \cite{DONOHOGS} provides a better picture of the potential benefits of Stein's shrinkage within AMP.

\section{Proofs of the main results}

We outline the proofs of Theorems \ref{thm:LASSO} and~\ref{thm:optimal}. Since the proof of Theorem 3.1 requires the result of Theorem 3.2, we begin by proving the latter. 
\subsection{Proof of Theorem \ref{thm:LASSO}}
We start by deriving an implicit formula for the PT of group LASSO. We further use this formula and Laplace's method to obtain the behavior of the phase transition in the strong undersampling regime. 
\subsubsection{Implicit formula for the phase transition of group LASSO} 
Since the performance of group-LASSO has been shown to be equivalent to AMP with block soft thresholding, we can use the state evolution formalism to obtain the phase transition for group LASSO. Due to the properties of SE \eqref{eqn:se}, for any MSE value and $\rho(\delta)$ below the PT, the following holds:
\begin{equation} \label{eq:PTLASSO}
 \frac{1}{\delta B} \mathbb{E} \Big\{\big\|{\bf x}_B - \eta^\text{soft}({ {\bf x}_B+ \sqrt{\rm MSE} {\bf z}_B} ; \tau \sqrt{\rm MSE}) \big\|_2^2 \Big\} \leq {\rm MSE}.
\end{equation} 
To ensure that (\ref{eq:PTLASSO}) is satisfied while achieving the optimal phase transition with respect to $\tau$, we use the minimax MSE $M_{B}$, which corresponds to \cite{DONOHOGS}
\begin{equation}
M_B = \frac{1}{B}\inf_{\tau} \sup_{G} \mathbb{E} \Big\{\left\|{\bf x}_B - \eta^\text{soft}({\bf x}_B+ {\bf z}_B;\tau) \right\|_2^2 \Big\}
\label{minimaxBlock}
\end{equation} 
in the asymptotic setting. 
Donoho \emph{et al.} showed in \cite{DONOHOGS}:
\begin{eqnarray*}
{M_B} = \frac{1}{B}\inf_{\tau} \Big \{\epsilon(B+{\tau}^2) + (1-\epsilon) \int_{\tau^2}^{\infty} (\sqrt{x} - \tau)^2 \frac{1}{2^{\frac{B}{2}}\Gamma(\frac{B}{2})} x^{\frac{B}{2}-1} e^{-\frac{x}{2}} \mathrm{d}x \Big \}. 
\end{eqnarray*}
One can rigorously show that the optimal phase transition obeys $M_B = \delta$ \cite{IntroAMP}. Using simple calculus, we obtain, $\rho^\text{L}(\delta)$ as follows.   
\begin{lemma}
\label{lemma:lassoPT}
The phase transition for group LASSO is given by: 
\begin{align}
\rho^{L}(\delta) = \frac{B\delta -\int_{{{\tau^*}}^2}^{\infty} (\sqrt{x}-{\tau^*})^2f(x)\mathrm{d}x}{\delta(B+{{\tau^*}}^2 - \int_{{\tau^*}^2}^{\infty} (\sqrt{x}-{\tau^*})^2 f(x) \mathrm{d}x)},
\label{eqn:rhodelta}
\end{align}
where $f(x)$ is the probability density function of $\Gamma(\frac{B}{2},\frac{1}{2})$ and the optimal threshold parameter, ${\tau^*}$, satisfies
\begin{equation}
\delta = \frac{ -(B+{\tau^*}^2) \int_{{\tau^*}^2}^{\infty} ({\tau^*} - \sqrt{x}) f(x) \mathrm{d}x  + \tau^* \int_{{\tau^*}^2}^{\infty} (\sqrt{x} - \tau^*)^2 f(x) \mathrm{d}x}{B\tau^{*} - B \int_{{\tau^*}^2}^{\infty} ({\tau^*}-\sqrt{x})f(x)\mathrm{d}x}. \\
\label{eqn:deltatau}
\end{equation}
\end{lemma}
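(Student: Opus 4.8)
The plan is to start from the two facts the text hands us: the minimax MSE $M_B$ (the explicit infimum over $\tau$ of an expression involving $\epsilon = \rho\delta$), and the characterization that the optimal phase transition obeys $M_B = \delta$. Lemma~\ref{lemma:lassoPT} is then essentially a matter of carrying out the ``simple calculus'' promised in the excerpt: solve the equation $M_B = \delta$ for $\rho$, and simultaneously extract the first-order optimality condition in $\tau$ that pins down the minimizer $\tau^*$. So the proof is a two-pronged computation, with no deep idea beyond careful differentiation and algebraic rearrangement.

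**First I would** write out the quantity inside the infimum in the formula for $M_B$ as a function of $\tau$, call it
\[
R(\tau) = \epsilon(B+\tau^2) + (1-\epsilon)\int_{\tau^2}^{\infty}(\sqrt{x}-\tau)^2 f(x)\,\mathrm{d}x,
\]
where I recognize the density $\frac{1}{2^{B/2}\Gamma(B/2)}x^{B/2-1}e^{-x/2}$ as the $\Gamma(\tfrac{B}{2},\tfrac12)$ density $f(x)$ (this is $\|{\bf z}_B\|_2^2$ for ${\bf z}_B\sim N(0,I_B)$, which is why it appears). The minimax value is $M_B = \tfrac1B\inf_\tau R(\tau)$, and the minimizer $\tau^*$ satisfies $R'(\tau^*)=0$. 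The differentiation is the delicate step: $\tau$ appears both in the integrand and in the lower limit of integration $\tau^2$, so I must apply Leibniz's rule. The boundary term, however, vanishes because the integrand $(\sqrt{x}-\tau)^2$ is zero at $x=\tau^2$; this is the key simplification that keeps the expressions clean. Hence
\[
R'(\tau) = 2\epsilon\tau + 2(1-\epsilon)\int_{\tau^2}^{\infty}(\tau-\sqrt{x})f(x)\,\mathrm{d}x.
\]
Setting this to zero at $\tau^*$ gives a relation between $\epsilon$ and $\tau^*$, and since $\epsilon = \rho\delta$, this is one equation linking $\rho$, $\delta$, and $\tau^*$.

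**Next I would** impose the phase-transition condition $M_B = \delta$, i.e. $R(\tau^*) = B\delta$, which is a second equation in the same three unknowns. At this point I have two equations; eliminating $\epsilon$ (equivalently $\rho$) between them yields the defining relation for $\tau^*$ in terms of $\delta$ alone, which I expect to match \eqref{eqn:deltatau}. Concretely, from $R'(\tau^*)=0$ I solve for $\epsilon$ in terms of $\tau^*$, substitute into $R(\tau^*)=B\delta$, and solve the resulting linear-in-$\epsilon$ (hence linear-in-$\rho$) equation for $\rho$ to obtain \eqref{eqn:rhodelta}. The main bookkeeping obstacle will be algebraic rather than conceptual: both target formulas are ratios whose numerators and denominators mix the three integrals $\int(\tau^*-\sqrt{x})f$, $\int(\sqrt{x}-\tau^*)^2 f$, and the polynomial terms $B$, $\tau^{*2}$, so I must group terms correctly and confirm that eliminating $\epsilon$ reproduces exactly the stated numerator and denominator. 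I would double-check the derivation by verifying dimensional and sign consistency (e.g. that $\rho^L(\delta)\in(0,1)$ and that the denominators are positive in the relevant range), and by confirming that the two displayed formulas are mutually consistent, i.e. that substituting \eqref{eqn:deltatau} back produces a $\rho^L(\delta)$ equal to \eqref{eqn:rhodelta}. No genuine analytic difficulty is anticipated; the entire lemma rests on the differentiation-under-the-integral step and clean elimination of $\epsilon$.
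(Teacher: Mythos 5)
Your proposal is correct and follows exactly the route the paper intends: the paper itself only says the lemma follows ``using simple calculus'' from $M_B=\delta$ and the explicit expression for $M_B$, and your two conditions --- the first-order condition $R'(\tau^*)=0$ (with the boundary term in Leibniz's rule vanishing because $(\sqrt{x}-\tau)^2=0$ at $x=\tau^2$) giving $\epsilon\tau^*+(1-\epsilon)\int_{\tau^{*2}}^{\infty}(\tau^*-\sqrt{x})f(x)\,\mathrm{d}x=0$, together with $R(\tau^*)=B\delta$ --- do reproduce \eqref{eqn:rhodelta} and, after eliminating $\epsilon=\rho\delta$, \eqref{eqn:deltatau}. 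No gap; this is the same argument, just written out.
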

It is important to note that $\rho^{L}(\delta)$ is independent of distribution $G$ (proof is omitted in this paper as it is a mere extension of the one provided in \cite{CAMP}).

\subsubsection{Phase transition behavior for $\delta \to 0$}
We are interested in observing the behavior of group LASSO for $\delta \to 0$. Intuitively, for such regime, a very sparse signal is recovered since $\epsilon \to 0$ ($\epsilon = \rho\delta$). From the definition of block soft thresholding, $\tau^*$ must be large to promote sparse recovery. Using this knowledge, the integrals in~(\ref{eqn:deltatau}) and ~(\ref{eqn:rhodelta}) can be approximated via Laplace's method. One such integral is
\begin{eqnarray*}
I_1 = \int_{{\tau^*}^2}^{\infty} ({\tau^*}-\sqrt{x})x^{\frac{B}{2}-1} e^{-\frac{x}{2}} \mathrm{d}x. 
\end{eqnarray*}
We begin by letting $y = -{\tau^*}+\sqrt{x}$. As a result, the expression for $I_1$ is simplified to
\begin{eqnarray*}
I_1 = -2e^{-\frac{{{\tau^*}}^2}{2}} \int_{0}^{\infty} y(y+{\tau^*})^{B-1} e^{-\frac{y^2}{2}} e^{-{\tau^*}y} \mathrm{d}y.
\end{eqnarray*}
Suppose we break the integral into two parts:
\begin{eqnarray*}
I_1 = -2e^{-\frac{{\tau^*}^2}{2}} \int_{0}^{\gamma} y(y+{\tau^*})^{B-1} e^{-\frac{y^2}{2}} e^{-{\tau^*}y} \mathrm{d}y 
- 2e^{-\frac{{{\tau^*}}^2}{2}} \int_{\gamma}^{\infty} y(y+{\tau^*})^{B-1} e^{-\frac{y^2}{2}} e^{-{\tau^*}y} \mathrm{d}y,
\end{eqnarray*}
where $\gamma$ is close to zero. Due to the decaying exponential in the integrand, the second integral, which is of the order $O(e^{-\frac{{\tau^*}^2}{2}}e^{-{\tau^*}\gamma})$, is much smaller than the first integral. Thus, we approximate $I_1$ by the first integral. Denote $g(y) = y(y+{\tau^*})^{B-1} e^{-\frac{y^2}{2}}$. Since~$\gamma$ is small, $g(y)$ can be well approximated with a first order taylor expansion around $y = 0$. This yields 
\begin{eqnarray*}
I_1 = -2e^{-\frac{{\tau^*}^2}{2}} \left(\int_0^{\gamma} y({\tau^*})^{B-1} e^{-{{\tau^*}}y} \mathrm{d}y + O({\tau^*}^{B-5})\right) \sim -2e^{-\frac{{\tau^*}^2}{2}} (\tau^*)^{B-3} ,
\end{eqnarray*}
where the second approximation is obtained via integration by parts. The other integral $I_2$ in (\ref{eqn:deltatau}) that we wish to approximate is given by
\begin{eqnarray*}
I_2 = \int_{{\tau^*}^2}^{\infty} (-{\tau^*}+\sqrt{x})^2x^{\frac{B}{2}-1} e^{-\frac{x}{2}} \mathrm{d}x. 
\end{eqnarray*}
Using similar techniques, we find $I_2 \sim 4e^{-\frac{{\tau^*}^2}{2}}(\tau^*)^{B-4}$. Substituting the approximations of $I_1$ and $I_2$ in (\ref{eqn:rhodelta}), we note that the numerator is dominated by the term $B\delta$, and the denominator by $\delta{\tau^*}^2$. With this behavior, we obtain
\begin{equation}
\rho^L(\delta) \sim \frac{B}{{\tau^*}^2},
\label{eqn:rhoapprox}
\end{equation}
as $\tau^* \to \infty$. The expression for $\delta$ is similarly derived to be
\begin{eqnarray*}
\delta \sim \frac{2(B+{\tau^*}^2)e^{-\frac{{\tau^*}^2}{2}}({\tau^*})^{B-3} + 4\tau^*e^{-\frac{{\tau^*}^2}{2}}(\tau^*)^{B-4}}{B\tau^*h(B) + 2Bh(B)e^{-\frac{{\tau^*}^2}{2}} ({\tau}^*)^{B-3}},
\end{eqnarray*}
where $h(B) = 2^{\frac{B}{2}} \Gamma(\frac{B}{2})$. Since $\tau^* \to \infty$, 
\begin{eqnarray}
\delta \sim \frac{{\tau^*}^{B-2} e^{-\frac{{\tau^*}^2}{2}}}{{2^{\frac{B}{2}-1}} B\Gamma(\frac{B}{2})}.
\label{eqn:deltaapprox}
\end{eqnarray}
From (\ref{eqn:deltaapprox}), it is clear that for large $\tau^*$, $\log(\delta^{-1}) \sim  {\tau^*}^2/2$. Combining this with (\ref{eqn:rhoapprox}) yields the desired result.

\subsection{Proof of Theorem \ref{thm:optimal}} \label{sec:pfAMP}
We now derive an expression for $\rho^*(\delta)$ when \mbox{$\delta \to 0$}. By definition,
\begin{eqnarray*}
M_B \triangleq \frac{1}{B}\inf_{\eta} \sup_{G} \mathbb{E} \Big\{\left\|{\bf x}_B - \eta({\bf x}_B+ {\bf z}_B) \right\|_2^2\Big \} = \frac{1}{B} \sup_{G} \inf_{\eta} \mathbb{E} \Big\{\left\|{\bf x}_B - \eta({\bf x}_B+ {\bf z}_B) \right\|_2^2\Big \},
\end{eqnarray*}
where $\mathbf{x}_B$ drawn from $F({\bf x}_B)$ as defined in  Section \ref{sec:notation}. Defining $\mathbf{y}_B= {\bf x}_B+ {\bf z}_B$, the Bayes estimator, $\mathbb{E}[{\bf x}_B|{\bf y}_B] $ minimizes the risk for every $G$. Thus
\begin{eqnarray*}
M_B(G) \triangleq  \frac{1}{B} \mathbb{E} \Big\{\left\|{\bf x}_B - \mathbb{E}[{\bf x}_B|{\bf y}_B] \right\|_2^2 \Big\}.
\end{eqnarray*}
Consider $F({\bf x}_B) = (1- \epsilon) \delta_{0}(\|{\bf x}_B\|_2) + \epsilon G^*({\bf x}_B)$, where $G^*$ is a distribution that is uniform on a sphere with radius $\mu$. We relate $\mu$ to $\epsilon$ as follows
\begin{eqnarray}
(1-\gamma)\log\Big(\frac{\epsilon}{1-\epsilon}\Big) = -\frac{\mu^2}{2} \hspace{.2in}
\gamma\log\Big(\frac{\epsilon}{1-\epsilon}\Big) = {-a\mu},
\label{eqn:conds}
\end{eqnarray}
where $\gamma$ is chosen such that as $\epsilon \to 0$, $\mu, a \to \infty$. One such choice could be 
\begin{eqnarray*}
\delta \sim \log\left(\frac{1-\epsilon}{\epsilon}\right)^{-1/4}. 
\end{eqnarray*}
Before we proceed to find the risk function associated with $G^*$, we will develop an intuition on the behavior of the Bayes estimator. 

Define $G = \mu{\bf \theta}_B$, where ${\bf \theta}_B$ is a distribution that is uniform on a sphere with unit radius. Denote $\hat{{\bf x}}_B(G^*)$ as the Bayes estimator for $G^*$. The k-th element of the Bayes estimator, $\hat{{\bf x}}_{[B,k]}$, is approximately given by 
\begin{equation}
\hat{{\bf x}}_{[B,k]}(G^*) = \frac{\epsilon \int \mu {\theta}_{[B,k]} e^{-\frac{1}{2}\left\|{\bf y}_B -  \mu{\bf \theta}_B \right\|_2^2} \mathrm{d\sigma}(\theta_B)}{(1-\epsilon)e^{-\frac{\left|{\bf y}_B \right\|_2^2}{2}} +  \epsilon \int e^{-\frac{1}{2}\left\|{\bf y}_B - \mu{\bf \theta}_B \right\|_2^2} \mathrm{d\sigma}(\theta_B)},
\label{eqn:BayesEstimator}
\end{equation}
where $\mathrm{d\sigma}(\theta_B)$ denotes the Haar measure on the unit sphere. Using the conditions in (\ref{eqn:conds}), we can rewrite (\ref{eqn:BayesEstimator}) as follows:
\begin{equation}
\hat{{\bf x}}_{[B,k]}(G^*) = \frac{ \int \mu {\theta}_{[B,k]} e^{\mu \langle {\bf y}_B,{\bf \theta}_{B} \rangle -\mu^2-a\mu} \mathrm {d \sigma({\bf \theta}_B)}}{1 +  \int e^{ \mu \langle {\bf y}_{B},{\bf \theta}_{B} \rangle - \mu^2-a\mu} \mathrm {d \sigma({\bf \theta}_B)}}.
\label{eqn:estimator}
\end{equation}
We wish to find an approximation for $\hat{{\bf x}}_{[B,k]}$. To that end, we approximate the integrals in (\ref{eqn:estimator}) using Laplace's method. This is achieved by determining the ${\theta^*}_B$ that maximizes the exponent of the integrand under the constraint that $\left \| {\bf \theta}_B \right \|_2 = 1$. The method of Lagrange multipliers yields
\vspace{-.1in}
\begin{eqnarray*}
\textstyle \frac{d}{d{\bf \theta}_B} \Big[ \langle {\bf \theta}_B,{\bf y}_{B} \rangle - \gamma\big(\sum_{j = 1}^{B} {\bf \theta}_{[B,j]}^2-1\big) \Big]= 0.
\end{eqnarray*}
The exponents in both integrals are maximized if ${\theta^*}_B = \frac{{\bf y}_B}{\left\|{\bf y}_B\right\|_2}$. Thus, (\ref{eqn:estimator}) is approximated by
\begin{eqnarray}
\hat{{\bf x}}_{[B,k]}(G^*) \approx \frac{\mu {\bf \theta^*}_{[B,k]}e^{\mu \left \|{\bf y}_B \right\|_2 - \mu^2-a\mu}}{1 +  e^{\mu \left \|{\bf y}_B \right\|_2 - \mu^2-a\mu}} \approx
\left\{\begin{array}{ll}
0  &  \left \| {\bf y}_B \right \|_2 < \mu + a \\[0.1cm]
\frac{{\mu {\bf y}_{[B,k]}}}{\left\| {\bf y}_B \right \|_2} &  \left \| {\bf y}_B \right \|_2 \geq \mu + a
\end{array}\right.
\end{eqnarray} 
Thus, we see that  $\hat{{\bf x}}_{B}(G^*)$, is approximately given by, 
\begin{eqnarray}
\hat{{\bf x}}_{B}(G^*) \approx
\left\{\begin{array}{ll}
0  &  \left \| {\bf y}_B \right \|_2 < \mu + a \\[0.1cm]
\frac{{\mu {\bf y}_{B}}}{\left\| {\bf y}_B \right \|_2}  &  \left \| {\bf y}_B \right \|_2 \geq \mu + a.
\end{array}\right.
\label{eqn:cond1}
\end{eqnarray} 
Therefore, very interestingly, the Bayes estimator for this distribution behaves similiarly to a hard thresholding function. \\
In the following lemmas, we further characterize the Bayes estimator and the risk function associated with $G^*$. These results will lead us to the behavior of the phase transition for this distribution.

\begin{lemma}
The Bayes estimator associated with $G^*$, $\hat{{\bf x}}_B(G^*)$, behaves as $\left\|\hat{{\bf x}}_B(G^*)\right\|_2 \leq \mu$.
\label{lemma:ineqBay}
\end{lemma}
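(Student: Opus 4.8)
The plan is to establish the bound $\|\hat{{\bf x}}_B(G^*)\|_2 \leq \mu$ directly from the structure of the Bayes estimator in \eqref{eqn:estimator}, rather than relying solely on the hard-thresholding approximation in \eqref{eqn:cond1} (which, being approximate, only suggests the bound). The key observation is that the Bayes estimator is a posterior-weighted average of the possible signal values. Under $G^*$, the signal ${\bf x}_B$ is either the zero vector (with prior weight $1-\epsilon$) or a point $\mu{\bf \theta}_B$ on the sphere of radius $\mu$ (with weight $\epsilon$). Since $\mathbb{E}[{\bf x}_B \mid {\bf y}_B]$ is a convex combination of $\mathbf{0}$ and points of the form $\mu{\bf \theta}_B$ with $\|{\bf \theta}_B\|_2 = 1$, it lies in the convex hull of the closed ball of radius $\mu$, and hence must itself have norm at most $\mu$.

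Concretely, I would first write the full (vector-valued) Bayes estimator from \eqref{eqn:BayesEstimator} or \eqref{eqn:estimator} as
\begin{equation*}
\hat{{\bf x}}_B(G^*) = \int \mu{\bf \theta}_B \, \mathrm{d}\pi({\bf \theta}_B \mid {\bf y}_B),
\end{equation*}
where $\mathrm{d}\pi(\cdot \mid {\bf y}_B)$ is the posterior distribution over $\{\mathbf{0}\} \cup \{\mu{\bf \theta}_B : \|{\bf \theta}_B\|_2 = 1\}$. The zero atom contributes nothing to the integral, so $\hat{{\bf x}}_B(G^*) = p \int \mu{\bf \theta}_B \, \mathrm{d}\tilde{\sigma}({\bf \theta}_B)$, where $p \leq 1$ is the total posterior mass assigned to the sphere and $\tilde{\sigma}$ is the (normalized) posterior over the unit sphere. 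I would then apply the triangle inequality (Jensen's inequality for the convex function $\|\cdot\|_2$) to obtain
\begin{equation*}
\left\|\hat{{\bf x}}_B(G^*)\right\|_2 \leq p \int \left\|\mu{\bf \theta}_B\right\|_2 \, \mathrm{d}\tilde{\sigma}({\bf \theta}_B) = p\,\mu \leq \mu,
\end{equation*}
since $\|\mu{\bf \theta}_B\|_2 = \mu$ for every unit vector ${\bf \theta}_B$ and $p \leq 1$.

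The argument is essentially a one-line convexity estimate, so I do not anticipate a genuine analytical obstacle. The main point requiring care is bookkeeping: making explicit that the denominator in \eqref{eqn:estimator} equals $1 + \int e^{\mu\langle {\bf y}_B, {\bf \theta}_B\rangle - \mu^2 - a\mu}\,\mathrm{d}\sigma({\bf \theta}_B)$, which is strictly larger than the integral appearing in the numerator's normalization, so that the effective weight $p$ on the sphere is genuinely at most one. I would verify this by exhibiting $p$ as the ratio $\int e^{\mu\langle {\bf y}_B, {\bf \theta}_B\rangle - \mu^2 - a\mu}\,\mathrm{d}\sigma / (1 + \int e^{\mu\langle {\bf y}_B, {\bf \theta}_B\rangle - \mu^2 - a\mu}\,\mathrm{d}\sigma)$ and noting it lies in $[0,1)$. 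This establishes the stated bound uniformly in ${\bf y}_B$, which is exactly what subsequent lemmas on the risk function will require.
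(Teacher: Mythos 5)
Your proof is correct, and it takes a genuinely different route from the paper. The paper argues by contradiction: it supposes $\|\hat{{\bf x}}_B(G^*)\|_2 > \mu$, defines a clipped estimator $\hat{{\bf w}}_B$ that projects $\hat{{\bf x}}_B(G^*)$ onto the ball of radius $\mu$, decomposes the risk of $\hat{{\bf w}}_B$, and asserts ``by geometric reasoning'' that the clipped estimator has strictly smaller risk than the Bayes estimator --- contradicting Bayes optimality. Your argument instead observes directly that the posterior under $F = (1-\epsilon)\delta_0 + \epsilon G^*$ is supported on $\{\mathbf{0}\} \cup \{\mu{\bf\theta}_B : \|{\bf\theta}_B\|_2 = 1\}$, a subset of the closed ball of radius $\mu$, so the posterior mean lies in the convex hull of that set and Jensen's inequality gives $\|\hat{{\bf x}}_B(G^*)\|_2 \leq p\mu \leq \mu$. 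Your version is preferable on two counts: it is fully rigorous where the paper's ``geometric reasoning'' step is left informal (the projection-reduces-risk argument needs the projection onto a convex set containing the support of ${\bf x}_B$ to \emph{strictly} decrease the loss on an event of positive probability, which the paper does not verify), and it yields the bound pointwise in ${\bf y}_B$ without invoking optimality of the Bayes estimator at all. The paper's approach does have the virtue of generality --- the same projection argument would bound any estimator that is claimed risk-optimal, not just a posterior mean --- but for this lemma the direct convexity estimate is the cleaner proof.
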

\vspace{-.1in}
\begin{proof}
Suppose that $\left\|\hat{{\bf x}}_B(G^*)\right\|_2 > \mu$. Define the estimator, $\hat{{\bf w}}_B$, as:
\begin{eqnarray*}
\hat{{\bf w}}_B = 
\left\{\begin{array}{ll}
 \hat{{\bf x}_B}(G^*) &  \left\|\hat{{\bf x}}_B(G^*)\right\|_2 \leq \mu \\[0.1cm]
\mu\frac{\hat{{\bf x}}_B(G^*)}{\left\|\hat{{\bf x}}_B(G^*)\right\|_2} &  \left\|\hat{{\bf x}}_B(G^*)\right\|_2 > \mu 
\end{array}\right.
\end{eqnarray*}

By definition, $\left\|\hat{{\bf w}}_B\right\|_2 \leq \left\|\hat{{\bf x}}_B(G^*)\right\|_2$. The risk associated with estimator $\hat{{\bf w}}_B$ is given by:
\begin{eqnarray*}
M_{\hat{{\bf w}}_B}(G^*) &=& (1-\epsilon)\mathbb{E}\left\|\hat{{\bf w}}_B\right\|_2^2 \\ &+& \epsilon\mathbb{E} \Big\{\left\|\hat{{\bf w}}_B - \mu\theta_B\right\|_2^2 \Big\|  \left\|\hat{{\bf x}}_B(G^*)\right\|_2 > \mu  \Big\} \mathbb{P}\Big\{ \left\|\hat{{\bf x}}_B(G^*)\right\|_2 > \mu\Big\} \\ &+& \epsilon\mathbb{E} \Big\{\left\|\hat{{\bf w}}_B - \mu\theta_B\right\|_2^2 \Big\|  \left\|\hat{{\bf x}}_B(G^*)\right\|_2 \leq \mu  \Big\} \mathbb{P}\Big\{ \left\|\hat{{\bf x}}_B(G^*)\right\|_2 \leq \mu\Big\}
\end{eqnarray*}
By geometric reasoning, it is easy to see that $M_{\hat{{\bf w}}_B}(G^*) < M_B(G^*)$ This however is a contradiction since the Bayes estimator, $\hat{{\bf x}}_B(G^*)$, is the optimal estimator for the distribution $G^*$. Thus, we conclude that $\left\|\hat{{\bf x}}_B\right\|_2 \leq \mu$.
\end{proof}
\\
Before proceeding to finding the risk associated with $G^*$, we provide the following useful lemma. 
\begin{lemma}\label{lemma:prob1}
Let ${\bf y}_B = \mu{\theta}_B + {\bf z}_B$. If $\mu$ and $a$ satisfy (\ref{eqn:conds}), then
\begin{eqnarray*}
\mathbb{P}\Big\{\left\|{\bf y_B} \right\|_2 > \mu + a\Big\} \leq 2e^{-\frac{a^2}{2}} + e^{-(\frac{1}{2}-\frac{B}{2a^2})a^2} \Big(\frac{B}{a^2}\Big)^{-\frac{B}{2}}.
\end{eqnarray*}
\end{lemma}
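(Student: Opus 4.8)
The plan is to reduce the tail probability of $\|{\bf y}_B\|_2$ to two independent, well-understood pieces: the tail of a single Gaussian coordinate and the tail of a chi-squared variable, each controlled by a Chernoff estimate. First I would use the rotational invariance of the standard Gaussian: since ${\bf z}_B \sim N(0,I_B)$ and $\|\theta_B\|_2 = 1$, I may condition on $\theta_B$ and rotate coordinates so that $\theta_B$ is a fixed unit vector; the law of $\|{\bf y}_B\|_2$ does not depend on $\theta_B$. Expanding the squared norm gives $\|{\bf y}_B\|_2^2 = \mu^2 + 2\mu\langle \theta_B, {\bf z}_B\rangle + \|{\bf z}_B\|_2^2$, and since $(\mu+a)^2 = \mu^2 + 2\mu a + a^2$, the event $\{\|{\bf y}_B\|_2 > \mu+a\}$ is exactly $\{2\mu\langle\theta_B,{\bf z}_B\rangle + \|{\bf z}_B\|_2^2 > 2\mu a + a^2\}$.

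The key step is a deterministic inclusion. If simultaneously $\langle\theta_B,{\bf z}_B\rangle \le a$ and $\|{\bf z}_B\|_2^2 \le a^2$, then, using $\mu>0$, we would have $2\mu\langle\theta_B,{\bf z}_B\rangle + \|{\bf z}_B\|_2^2 \le 2\mu a + a^2$, contradicting the event. Taking the contrapositive yields
\[
\{\|{\bf y}_B\|_2 > \mu+a\} \subseteq \{\langle\theta_B,{\bf z}_B\rangle > a\} \cup \{\|{\bf z}_B\|_2^2 > a^2\},
\]
and a union bound splits the estimate into the two advertised terms.

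For the first term, $\langle\theta_B,{\bf z}_B\rangle$ is the projection of a standard Gaussian vector onto a unit vector and is therefore itself $N(0,1)$, so the Gaussian Chernoff bound gives $\mathbb{P}\{\langle\theta_B,{\bf z}_B\rangle > a\} \le e^{-a^2/2} \le 2e^{-a^2/2}$. For the second term, $\|{\bf z}_B\|_2^2 \sim \chi^2_B$ has moment generating function $(1-2s)^{-B/2}$ for $s<1/2$. Optimizing the Markov bound $\mathbb{P}\{\|{\bf z}_B\|_2^2 > a^2\} \le e^{-sa^2}(1-2s)^{-B/2}$ over $s$ gives the maximizer $1-2s = B/a^2$ and the estimate $\mathbb{P}\{\|{\bf z}_B\|_2^2 > a^2\} \le (a^2/B)^{B/2}\,e^{-(a^2-B)/2}$, which is precisely $e^{-(\frac12 - \frac{B}{2a^2})a^2}(B/a^2)^{-B/2}$. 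Summing the two contributions delivers the claimed bound.

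The main obstacle, and the only place needing genuine care, is the chi-squared tail: the Chernoff optimizer $s = \tfrac12(1-B/a^2)$ lies in $(0,\tfrac12)$ only when $a^2 > B$, so I must invoke the regime $a\to\infty$ guaranteed by \eqref{eqn:conds} to make the optimization admissible. The remaining points are routine: the factor $2$ in the Gaussian term is harmless slack ($e^{-a^2/2}\le 2e^{-a^2/2}$), and the reduction to a fixed $\theta_B$ is legitimate precisely because the final bound is independent of $\theta_B$.
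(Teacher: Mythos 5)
Your proof is correct and follows essentially the same route as the paper: reduce to a fixed $\theta_B$ by rotational invariance, split the event into the Gaussian-projection tail and the $\chi^2_B$ tail (the paper does this by conditioning on $|{\bf z}_{[B,1]}|\lessgtr a$, which is equivalent to your set inclusion plus union bound), and apply the standard Gaussian tail bound together with the Chernoff bound for $\chi^2_B$. The only cosmetic difference is that you derive the chi-squared tail estimate explicitly, where the paper cites it from a reference, and you correctly flag the $a^2>B$ admissibility condition that the paper leaves implicit.
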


\begin{proof}
Since $\mathbf{z}_B \sim N(0,I_B)$, $\|\mathbf{z}_B\|$ and $\frac{\mathbf{z}_B}{\|\mathbf{z}_B\|_2}$ are independent. Furthermore,  $\frac{\mathbf{z}_B}{\|\mathbf{z}_B\|_2}$ has a uniform (Haar) distribution on the unit sphere in $\mathbb{R}^B$. Therefore, $\mathbb{P}\Big\{\left\|{\bf y_B} \right\|_2 > \mu + a\Big\}$ does not depend on $\theta_B$ and hence we set $\theta_B =(1, 0, \ldots, 0)$. 
\begin{eqnarray}\label{eq:maintermprob}
\mathbb{P}\Big\{\left\|{\bf y_B} \right \|_2 > \mu + a\Big\} &=& \mathbb{P}\Big\{\left\|(\mu,0,\dots,0) +  {\bf z}_B\right\|_2^2 \geq (\mu+a)^2\Big\} \nonumber \\ 
& =& \mathbb{P}\Big\{\left\|{\bf z}_B \right\|_2^2 + 2{\mu}{\bf z}_{[B,1]}\geq a^2 + 2a\mu\Big\} \nonumber \\
&=& \mathbb{P}\Big\{ \left\|{\bf z}_B \right\|_2^2 + 2{\mu}{\bf z}_{[B,1]}\geq a^2 + 2a\mu \Big | \left|{\bf z}_{[B,1]}\right| > a\Big\}\mathbb{P}\Big\{\left|{\bf z}_{[B,1]}\right| > a\Big\}\nonumber \\ &+& {\mathbb{P}}\Big\{\left\|{\bf z}_B \right\|_2^2 + 2{\mu}{\bf z}_{[B,1]}\geq a^2 + 2a\mu \Big| \left|{\bf z}_{[B,1]}\right| < a\Big\}\mathbb{P}\Big\{\left|{\bf z}_{[B,1]}\right| < a\Big\} \nonumber \\
&\leq&\mathbb{P}\Big\{\left|{\bf z}_{[B,1]}\right| > a\Big\}  \nonumber \\
 &+& {\mathbb{P}}\Big\{\left\|{\bf z}_B \right\|_2^2 + 2{\mu}{\bf z}_{[B,1]}\geq a^2 + 2a\mu \Big| \left|{\bf z}_{[B,1]}\right| < a\Big\} \mathbb{P}\Big\{\left|{\bf z}_{[B,1]}\right| < a\Big\}. \nonumber \\
\end{eqnarray}
Using standard bounds on the tail of a Gaussian random variable we obtain,
\begin{eqnarray}
\mathbb{P}\Big\{\left|{\bf z}_{[B,1]}\right| > a\Big\} \leq 2e^{-\frac{a^2}{2}}
\label{eqn:probab1}
\end{eqnarray}
 Furthermore,  
\begin{align}
\qquad\qquad&{\mathbb{P}}\Big\{\left\|{\bf z}_B \right\|_2^2 + 2{\mu}{\bf z}_{[B,1]}\geq a^2 + 2a\mu \Big| \left|{\bf z}_{[B,1]}\right| < a\Big\}\mathbb{P}\Big\{\left|{\bf z}_{[B,1]}\right| < a\Big\} \nonumber \\
&\qquad\qquad\leq \mathbb{P}\Big\{ \left\|{\bf z}_B \right\|_2^2 \geq a^2 \Big| |{\bf z}_{[B,1]}| < a\Big\}\mathbb{P}\Big\{\left|{\bf z}_{[B,1]}\right| < a\Big\}  \nonumber \\  
&\qquad\qquad\leq \mathbb{P}\Big\{\left\|{\bf z}_B \right\|_2^2 \geq a^2 \Big\}  \nonumber \\ 
&\qquad\qquad\leq e^{-(\frac{1}{2}-\frac{B}{2a^2})a^2} \Big(\frac{B}{a^2}\Big)^{-\frac{B}{2}} \label{eqn:probab2}
\end{align}
The technique to obtain the last inequality can be found in ~\cite[Section 5]{ARSH}. Combining \eqref{eq:maintermprob}, (\ref{eqn:probab1}), and (\ref{eqn:probab2}), we obtain the desired result. 
\end{proof}

In the following lemma, we characterize the risk function associated with $G^*$.
\begin{lemma}
Suppose ${\bf x}_B \sim F({\bf x}_B)$. Denote the associated risk function as $M_B({G^*})$. Let $\gamma > 0$, where $\gamma$ can be made arbitrarily small. Set $(1-\gamma)\log \big(\frac{1-\epsilon}{\epsilon} \big) = \frac{1}{2}\mu^2$.
Then,
\begin{eqnarray*}
M_B(G^*) \sim \epsilon\frac{2(1-\gamma)}{B} \log \Big(\frac{1-\epsilon}{\epsilon} \Big) \hspace{0.1in} \rm{as}\hspace{0.05in} \epsilon \to 0. 
\end{eqnarray*}
\end{lemma}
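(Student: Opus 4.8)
The plan is to compute $B\,M_B(G^*) = \mathbb{E}\{\|{\bf x}_B - \hat{{\bf x}}_B(G^*)\|_2^2\}$ by conditioning on the two atoms of the mixture $F$: the null component (probability $1-\epsilon$, with ${\bf x}_B = 0$) and the signal component (probability $\epsilon$, with ${\bf x}_B = \mu\theta_B$). Throughout I would use the hard-thresholding description of the Bayes estimator from \eqref{eqn:cond1}, together with Lemma \ref{lemma:ineqBay}, which guarantees $\|\hat{{\bf x}}_B(G^*)\|_2 \leq \mu$ and in fact $\|\hat{{\bf x}}_B(G^*)\|_2 = \mu$ whenever the estimator is nonzero, since on $\{\|{\bf y}_B\|_2 \geq \mu+a\}$ it equals $\mu\,{\bf y}_B/\|{\bf y}_B\|_2$. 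The goal then reduces to showing $B\,M_B(G^*) \sim \epsilon\mu^2$, after which substituting $\tfrac12\mu^2 = (1-\gamma)\log(\tfrac{1-\epsilon}{\epsilon})$ yields the stated asymptotics immediately.

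For the null component we have ${\bf y}_B = {\bf z}_B$, so the estimator vanishes unless $\|{\bf z}_B\|_2 \geq \mu+a$, in which case the squared error is $\mu^2$; the contribution is therefore $(1-\epsilon)\mu^2\,\mathbb{P}\{\|{\bf z}_B\|_2 \geq \mu+a\}$. I would bound this chi tail by the argument behind Lemma \ref{lemma:prob1}, producing a factor of order $e^{-(\mu+a)^2/2}$ up to polynomial corrections. The key quantitative observation is that the conditions \eqref{eqn:conds} give $\tfrac12\mu^2 + a\mu = \log(\tfrac{1-\epsilon}{\epsilon})$, so that $e^{-\mu^2/2 - a\mu} \sim \epsilon$ and hence $e^{-(\mu+a)^2/2} \sim \epsilon\,e^{-a^2/2}$; since $a \to \infty$, this contribution is $\mu^2 \cdot o(\epsilon) = o(\epsilon\mu^2)$ and is negligible.

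For the signal component we have ${\bf y}_B = \mu\theta_B + {\bf z}_B$, and I would split according to whether the estimator fires. By Lemma \ref{lemma:prob1}, $\mathbb{P}\{\|{\bf y}_B\|_2 > \mu+a\} \to 0$, so with probability tending to $1$ the estimator returns $0$ and incurs squared error $\|\mu\theta_B\|_2^2 = \mu^2$; this event contributes $\epsilon\mu^2(1-o(1)) \sim \epsilon\mu^2$ and is the dominant term. On the complementary event the estimator equals $\mu\,{\bf y}_B/\|{\bf y}_B\|_2$, so the squared error is at most $(2\mu)^2 = 4\mu^2$, while the event has probability $o(1)$ again by Lemma \ref{lemma:prob1}; this contributes at most $\epsilon\cdot 4\mu^2 \cdot o(1) = o(\epsilon\mu^2)$. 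Collecting the three pieces gives $B\,M_B(G^*) \sim \epsilon\mu^2$, and the lemma follows.

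I expect the main obstacle to be rigor rather than ideas: one must verify that the approximation \eqref{eqn:cond1} is accurate enough that the lower-order terms dropped in the Laplace approximation of the Bayes estimator do not disturb the leading $\epsilon\mu^2$ behavior, and that the tail bounds of Lemma \ref{lemma:prob1} can be sharpened to the precise statement $e^{-(\mu+a)^2/2} \sim \epsilon\,e^{-a^2/2} = o(\epsilon)$ under the exact coupling \eqref{eqn:conds} of $\mu$ and $a$ to $\epsilon$, with polynomial prefactors absorbed by the decay of $e^{-a^2/2}$. Everything else is bookkeeping of the three conditional contributions.
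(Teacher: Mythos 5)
Your decomposition into null and signal components, the split on whether $\|{\bf y}_B\|_2$ exceeds the threshold $\mu+a$, and the use of Lemmas \ref{lemma:ineqBay} and \ref{lemma:prob1} to show the off-event contributions are $o(\epsilon\mu^2)$ is exactly the paper's argument; the dominant term $\epsilon\mu^2/B$ arises from the same event (signal present, estimator returns zero), and your identity $e^{-\mu^2/2-a\mu}\sim\epsilon$ makes explicit the tail comparison the paper leaves implicit. The one step you defer---controlling the error between the hard-threshold approximation \eqref{eqn:cond1} and the true Bayes estimator---is handled in the paper by conditioning on $\|{\bf y}_B\|_2 < \mu+a-\alpha$ with a margin $\alpha\to 0$, $\alpha\mu\to\infty$, which bounds the residual by $O(\mu^2 e^{-\alpha\mu})$ and renders it negligible.
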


\begin{proof} 
By definition, the risk is equal to
\begin{eqnarray*}
M_B(G^*) = \underbrace{\frac{\epsilon}{B} \mathbb{E} \Big \{\left\|\mu{\bf \theta}_B - \mathbb{E}[{\bf x}_B|{\bf y}_B] \right\|_2^2 \Big\}}_{M_B^1(G^*)}+ \underbrace{\frac{1-\epsilon}{B} \mathbb{E} \Big\{\left\|0 - \mathbb{E}[{\bf x}_B|{\bf z}_B] \right\|_2^2 \Big\}}_{M_B^2(G^*)},
\end{eqnarray*}
where ${\bf y}_B = \mu{\bf \theta}_B + {\bf z}_B$. The expectation of $M_B^1(G^*)$ is with respect to ${\bf z}_B \sim N(0,I_{B})$ as well as ${\bf x}_B \sim G^*$. The expectation of $M_B^2(G^*)$ is with respect to $z_B$. We begin by computing $M_B^1(G^*)$. Applying the conditions in (\ref{eqn:cond1}), $M_B^1(G^*)$ can be expressed as
\begin{align}
M_B^1(G^*) = & \underbrace{\frac{\epsilon}{B} \mathbb{E} \Big\{\left\|\mu{\bf \theta}_B - \mathbb{E}[{\bf x}_B|{\bf y}_B]\right\|_2^2 \Big|\left\|{\bf y}_B \right\|_2 < \mu + a - \alpha \Big\} \mathbb{P}\Big\{\left\|{\bf y}_B \right\|_2 < \mu + a - \alpha\Big \}}_{M_B^{1*}(G^*)} \nonumber \\& +  \underbrace{\frac{\epsilon}{B} \mathbb{E} \Big\{\left\|{\bf \mu \theta}_B - \mathbb{E}[{\bf x}_B|{\bf y}_B] \right\|_2^2 \Big|\left\|{\bf y}_B \right\|_2 \geq \mu + a - \alpha\Big\} \mathbb{P}\Big\{\left\|{\bf y_B} \right \|_2 \geq \mu + a - \alpha\Big\}}_{M_B^{1**}(G^*)} . 
\label{eqn:M1term}
\end{align}
where $\alpha$ is a constant that can be made arbitrarily small. We first begin by bounding ${M_B^{1**}(G^*)}$. Using Lemma \ref{lemma:ineqBay}, we note
\begin{eqnarray*}
\left\|{\bf \mu \theta}_B - \mathbb{E}[{\bf x}_B|{\bf y}_B] \right\|_2  \leq \left\|{\bf \mu \theta}_B \right \|_2 + \left\|\mathbb{E}[{\bf x}_B|{\bf y}_B] \right \|_2 \leq 2\mu.
\end{eqnarray*} 
Therefore, 
\begin{align}
{M_B^{1**}(G^*)} \leq \frac{4\mu^2\epsilon}{B}\mathbb{P}\Big\{\left\|{\bf y_B} \right \|_2 \geq \mu + a - \alpha\Big\} 
\label{eqn:M11}
\end{align}
Next,  we consider ${M_B^{1*}}$. Recall that the k-th element of Bayes estimator $\mathbb{E}[{\bf x}_B|{\bf y}_B]$ associated with the distribution $G^*$ is given by (\ref{eqn:BayesEstimator}). Then, given $\left\|{\bf y}_B \right \|_2 \leq \mu + a - \alpha$, we have

\begin{align*}
|\hat{{\bf x}}_{[B,k]}(G^*)| =  \frac{\Big|\int \mu {{\theta}}_{[B,k]}^{'} e^{\mu \langle {\bf y}_B,{\bf \theta}_{B}^{'} \rangle -\mu^2-a\mu} \mathrm {d \sigma({\bf \theta}_B^{'})}\Big|}{\Big|1 +  \int e^{ \mu \langle {\bf y}_{B},{\bf {\theta}}_{B}^{'} \rangle - \mu^2-a\mu} \mathrm {d \sigma({\bf {\theta}}_B^{'})}\Big|} \leq   {\int \mu \Big|{{\theta}}_{[B,k]}^{'} e^{\mu \langle {\bf y}_B,{\bf \theta}_{B}^{'} \rangle -\mu^2-a\mu} \Big|\mathrm {d \sigma({\bf \theta}_B^{'})}}
\end{align*}
To provide a further bound, the exponent can be maximized by letting $\theta_B^{'} = \frac{{\bf y}_B}{\left\|{\bf y}_B \right\|_2}$. Moreover, $|{\theta_{[B,k]}^{'}}|< \mu$. Thus,
\begin{align*}
\mu\left|{\theta}_{[B,k]}\right|- \mu^2{e^{-\alpha\mu}} \leq \left|\mu{\bf \theta}_{[B,k]} - \mathbb{E}[{\bf x}_{[B,k]}|{\bf y}_{[B,k]}] \right|\leq \mu\left|{\theta}_{[B, k]}\right|+ \mu^2{e^{-\alpha\mu}}.
\end{align*}
Since $\alpha$ can be chosen arbitrarily, we let $\alpha \to 0$ in a manner such that $\alpha\mu \to \infty$. Therefore,
\begin{eqnarray}
M_B^{1*}(G^*) &=&\frac{\epsilon}{B} \mathbb{E} \Big\{\left\|\mu{\bf \theta}_B - \mathbb{E}[{\bf x}_B|{\bf y}_B]\right\|_2^2 \Big|\left\|{\bf y}_B \right\|_2 < \mu + a - \alpha \Big\} \mathbb{P}\Big\{\left\|{\bf y}_B \right\|_2 < \mu + a - \alpha\Big \} \nonumber \\  &=& \frac{\epsilon}{B} \mathbb{E} \Big\{\left\|\mu{\bf \theta}_B \right\|_2^2 \Big\} \mathbb{P}\Big\{\left\|{\bf y}_B \right\|_2 < \mu + a - \alpha\Big \} \nonumber \\
&=&\frac{\epsilon\mu^2}{B}\mathbb{P}\Big\{\left\|{\bf y}_B \right\|_2 < \mu + a - \alpha\Big\}
\label{eqn:M12}
\end{eqnarray}

Combining (\ref{eqn:M1term}),(\ref{eqn:M11}), (\ref{eqn:M12}), and the fact that $M_B^1(G^*) = M_B^{1*}(G^*) + M_B^{1**}(G^*)$ yields
\begin{eqnarray*}
M_B^1(G^*) \leq \frac{\epsilon\mu^2}{B}\mathbb{P}\Big\{\left\|{\bf y}_B \right\|_2 < \mu + a - \alpha\Big\} + \frac{4\mu^2\epsilon}{B}\mathbb{P}\Big\{\left\|{\bf y_B} \right \|_2 \geq \mu + a - \alpha\Big\}.  
\end{eqnarray*}
After further manipulation, we obtain
\begin{eqnarray*}
 \left|M_B^1(G^*)-  \frac{\epsilon \mu^2}{B}\right| &\leq& \frac{3 \epsilon \mu^2}{B} \mathbb{P}\Big\{\left\|{\bf y}_B \right \|_2 \geq \mu + a - \alpha\Big\} \nonumber \\
&\leq&  \frac{3 \epsilon \mu^2}{B} \left(2e^{-\frac{(a-\alpha)^2}{2}} + e^{-(\frac{1}{2}-\frac{B}{2(a-\alpha)^2})(a-\alpha)^2} \Big(\frac{B}{a-\alpha}\Big)^{-B}B^{\frac{B}{2}} \right).
\end{eqnarray*}

Since $\alpha\to 0$, and $a \to \infty$, we conclude that $M_B^1(G^*) \sim \frac{\epsilon\mu^2}{B}$.
We next characterize $M_B^2(G^*)$:
\begin{eqnarray*}
M_B^2(G^*) &=& \frac{1- \epsilon}{B} \mathbb{E} \Big \{ \Big \|\mathbb{E}[\mathbf{x}_B | \mathbf{z}_B] \Big \|_2^2 \Big \} \nonumber \\
&=&\frac{1- \epsilon}{B} \mathbb{E} \Big \{ \Big\|\mathbb{E}[\mathbf{x}_B | \mathbf{z}_B] \Big \|_2^2  \Big| \|\mathbf{z}_B\|_2 \geq \mu+a -\alpha \Big \} \mathbb{P} \Big\{\|\mathbf{z}_B\|_2 \geq \mu+a-\alpha \Big\} \nonumber \\ &+& \frac{1- \epsilon}{B} \mathbb{E} \Big \{ \Big\|\mathbb{E}[\mathbf{x}_B | \mathbf{z}_B] \Big \|_2^2  \Big| \|\mathbf{z}_B\|_2 < \mu+a - \alpha \Big \} \mathbb{P} \Big\{\|\mathbf{z}_B\|_2 < \mu+a - \alpha \Big\}  \nonumber \\
&\leq&\frac{(1- \epsilon)\mu^2}{B} \mathbb{P} \Big\{\|\mathbf{z}_B\|_2 \geq \mu+a-\alpha\Big\} + \frac{1- \epsilon}{B}\Big({e^{-\alpha\mu}}\Big)^2 
\end{eqnarray*}
It is straightforward to see that this term is also negligible compared to $\frac{\epsilon \mu^2}{B}$ and therefore $M_B(G^*) = M_B^1(G^*)+ M_B^2(G^*) \sim \frac{\epsilon \mu^2}{B}$.
Combining this with (\ref{eqn:conds}) gives the desired result. 
\end{proof}\\
The PT for the distribution $F({\bf x}_B)$ is found by letting $M_B(G^*) = \delta$ to obtain $\rho(\delta) \sim {B}/({2(1-\gamma)\log(\frac{1-\epsilon}{\epsilon})})$. Since $\epsilon \to 0$ and $\gamma$ can be made arbitrarily small, we have $\rho(\delta) \lesssim {B}/({2\log(\frac{1}{\delta})})$. From (\ref{eqn:optimalpt}), we know that $ \rho^*(\delta) \lesssim \rho(\delta) \lesssim {B}/({2\log(\frac{1}{\delta})})$.  Moreover, recall from Theorem \ref{thm:LASSO}, $\rho^\text{L}(\delta) \sim B/(2\log(\frac{1}{\delta}))$ and is independent of the distribution $G$. By definition of optimality, $\rho^*(\delta) \gtrsim \rho^\text{L}(\delta) \sim B/(2\log(\frac{1}{\delta})) $. Comparing this with the upper bound for $\rho^*(\delta)$, we conclude that $\rho^*(\delta) \sim B/(2\log(\frac{1}{\delta}))$.

\bibliographystyle{unsrt}
\bibliography{references2}

\end{document}